\def\BibTeX{{\rm B\kern-.05em{\sc i\kern-.025em b}\kern-.08em
    T\kern-.1667em\lower.7ex\hbox{E}\kern-.125emX}}
\DeclareMathOperator*{\argmin}{arg\,min}
\newcommand{\dotp}[2]{ \langle #1, #2 \rangle}
\newcommand{\norm}[1]{||#1||}
\newcommand{\lat}{\mathcal{L}}
\newcommand{\R}{\mathbb{R}}
\newcommand{\Z}{\mathbb{Z}}
\newcommand{\CVP}{\mathsf{CVP}}
\newcommand{\MDSP}{\mathsf{MDSP}}
\newcommand{\SVP}{\mathsf{SVP}}
\newcommand{\SVPS}{\mathsf{SVPS}}
\newcommand{\Id}{\mathbb{I}}
\newtheorem{theorem}{Theorem}
\newtheorem{corollary}[theorem]{Corollary}
\newtheorem{lemma}[theorem]{Lemma}
\newtheorem{definition}{Definition}
\newtheorem{claim}{Claim}
\begin{document}

\title{On the Maximum Distance Sublattice Problem and Closest Vector Problem
% {\footnotesize \textsuperscript{*}Note: Sub-titles are not captured in Xplore and
% should not be used}
% \thanks{Identify applicable funding agency here. If none, delete this.}
}

\author{\IEEEauthorblockN{Rajendra Kumar}
\IEEEauthorblockA{\textit{Department of Computer Science \& Engineering} \\
\textit{Indian Institute of Technology Delhi}\\
New Delhi, India \\
r\_kumar@cse.iitd.ac.in}
\and
\IEEEauthorblockN{Shashank K Mehta}
\IEEEauthorblockA{\textit{Department of Computer Science \& Engineering} \\
\textit{Indian Institute of Technology Kanpur}\\
Uttar Pradesh, India \\
skmehta@cse.iitk.ac.in}
\and
\IEEEauthorblockN{Mahesh Sreekumar Rajasree}
\hspace{50em}\IEEEauthorblockA{\textit{Department of Computer Science \& Engineering} \\
\textit{Indian Institute of Technology Delhi}\\
New Delhi, India \\
srmahesh1994@gmail.com}
}

\maketitle

\begin{abstract}
In this paper, we introduce the Maximum Distance Sublattice Problem ($\MDSP$). We observed that the problem of solving an instance of the Closest Vector Problem ($\CVP$) in a lattice $\lat$ is the same as solving an instance of $\MDSP$ in the dual lattice of $\lat$. We give an alternate reduction between the $\CVP$ and  $\MDSP$. This alternate reduction does not use the concept of dual lattice. 
\end{abstract}

\begin{IEEEkeywords}
lattice, Karp reduction, geometry, $\CVP$, GSO
\end{IEEEkeywords}

\section{Introduction}
For any set of linearly independent vectors 
$B = \{\vec{b_1},\hdots,\vec{b_n}\}\in \mathbb{R}^{m\times n}$, a \emph{lattice} $\lat$ is defined to be the set of vectors that consists of the integer linear combinations of vectors from
$B$. Formally it is defined as follows.
\[\lat=\lat(\vec{b_1},\hdots,\vec{b_n})= \left\{\sum_{i=1}^n z_ib_i \mid z_1, \dots, z_n \in \Z \right\}\]
Here, we call $n$ the rank of the lattice $\lat$ and $m$ as the ambient dimension. We call the set $B$ a basis of the lattice. Note that, a lattice can have infinitely many bases. Lattices have an enormous number of applications in Number theory~\cite{lenstra1982factoring, kannan1987minkowski,frank1987application} and Cryptanalysis~\cite{shamir1982polynomial,brickell1984breaking}. In the last two decade lattices got special attention due to their applications in Cryptography. Lattice-based Cryptosystems are considered the most prominent candidate for Post-Quantum Cryptography~\cite{ajtai1996generating,micciancio2007worst,regev2009lattices,regev2006lattice}.

The Shortest Vector problem ($\SVP$) and Closest Vector problem ($\CVP$) are two
well known and widely studied lattice problems.
Given a basis $B$ of the lattice $\mathcal{L}$, the shortest vector problem is to find a shortest (in some norm, usually in Euclidean-norm) non-zero vector in the lattice. In the closest vector problem we are also given a target vector $\vec{t}$ in the vector space of the lattice and the goal is to find the lattice vector closest (usually in Euclidean-norm) to the target $\vec{t}$. $\CVP$ is known to be NP-hard for approximation factor less than $n^{1/\log\log n}$~\cite{arora1997hardness,cai1998approximating,dinur1998approximating}. $\SVP$ is only shown to be NP-hard to approximate with constant approximation factor only by a randomized reduction\footnote{It is an long standing open problem to show NP-hardness for SVP via a deterministic reduction.}~\cite{Ajtai1998TheSV,micciancio2001shortest,micciancio2013deterministic}. It is also known to be poly-time hard for approximation factor
$n^{\mathcal{O}(1/\log\log n)}$ under some complexity theoretic assumption~\cite{khot2005hardness,haviv2007tensor}. Recently, there is also a series of works on the fine grained hardness of $\CVP$ \cite{BGSQuantitativeHardness17, ABGSFinegrainedHardness19, aggarwal2023we} and $\SVP$ \cite{ASGapETH18}. It is also know that $\CVP$ is at least as hard as  $\SVP$ as there is an approximation factor, rank and dimension preserving reduction from $\SVP$ to $\CVP$\cite{goldreich1999approximating}. 
 
All known algorithms for $\SVP$ and $\CVP$ require at least exponential time.  Kannan \cite{kannan1987minkowski} gave an enumeration based algorithm for $\CVP$ which takes $n^{\mathcal{O}(n)}$ time and polynomial space. There are also some improvements on running time of Kannan's algorithm \cite{hanrot2007improved,micciancio2014fast}. In 2001, Ajtai, Kumar and Sivakumar gave the first $2^{\mathcal{O}(n)}$ time and space sieving algorithm for $\SVP$~\cite{ajtai2001sieve} and $\CVP$~\cite{ajtai2002sampling}. There has been extensive works to improve the sieving algorithms for $\SVP$ and $\CVP$ \cite{nguyen2008sieve,arvind2008some,blomer2009sampling,pujol2009solving,micciancio2010faster,hanrot2011algorithms}. Fastest known classical algorithm for $\SVP$ and $\CVP$ takes $2^{n+o(n)}$ time and space, based on Discrete Gaussian Sampling \cite{aggarwal2015solvinga,aggarwal2015solvingb}. Recently Aggarwal, Chen, Kumar and Shen gave a faster quantum algorithm for $\SVP$ that requires $2^{0.835n+o(n)}$ time and exponential size QRAM and classical space~\cite{aggarwal2020improved}.

In 1982, Lenstra, Lenstra and Lovasz~\cite{lenstra1982factoring} gave a polynomial time algorithm (known as
LLL) for finding an exponential approximation of the shortest vector in the
lattices. The applications of LLL are found in factoring polynomials over rationals, finding linear Diophantine approximations, cryptanalysis of RSA and other
cryptosystems~\cite{coppersmith1996finding,shamir1982polynomial,coppersmith1996finding2}. Babai ~\cite{babai1986lovasz} gave a polynomial time algorithm, which uses
LLL, for approximating $\CVP$ with exponential approximation factor. Schnorr
has given improvements over the LLL algorithm \cite{schnorr1987hierarchy,schnorr1994lattice}.

\subsection{Our Contributions:}
In this paper, we introduce the Maximum Distance Sublattice Problem ($\MDSP$). Given a lattice vector $\vec{v}$, the goal is to find a sublattice of $n-1$ rank whose distance from the lattice vector $\vec{v}$ is maximum. We first observe that the $\MDSP$ problem reduces to the $\CVP$ on the dual lattice. The main technical contribution of our work is a reduction between the $\MDSP$ and $\CVP$ without using the notion of the dual lattice. The reduction employs novel geometric results that might be of independent importance. Our reduction preserves the dimension and rank of the lattice\footnote{We say a reduction is dimension-preserving and rank-preserving as long as the rank and dimension increases (or decreases) at most by 1.}.   

\begin{theorem}
\label{thm:main}
There exists a polynomial time rank-preserving dimension-preserving many-one (Karp) reduction between $\MDSP$ and $\CVP$.
\end{theorem}

The proof of the theorem is presented in \Cref{MDSPCVP}. We state our reduction for only for exact problem. It is easy to extend it for any approximation factor.

\subsection{Organisation:}
The rest of the paper is organised as follows. In section 2, we provide definitions and the trivial reduction between $\CVP$ and $\MDSP$. Section 3 contains our new reduction between $\CVP$ and $\MDSP$.

\section{Preliminaries}

In this paper $\mathbb{Z}$, $\mathbb{R}$ and $\mathbb{Q}$ will denote the sets of integers, reals and 
rationals respectively. Vectors will be denoted by small letters as in $\vec{v}$ and matrices
and basis sets will be denoted in capital letters. We will use $\Id_{n}$ to denote the $n \times n$ identity matrix. Let $B = \{\vec{b_1},\dots,\vec{b_k}\}$ be a set of vectors in $\mathbb{R}^n$. 
The subspace of $\mathbb{R}^n$ spanned by $B$ will be denoted by $span(B)$.

In this paper, we will work with vector space $V = \R^n$. For any vectors $\vec{u}, \vec{v} \in \R^n$, we use the notation $\dotp{\vec{u}}{\vec{v}}$ to denote the dot-product of the two vectors, i.e., $\dotp{\vec{u}}{\vec{v}} = \sum_{i=1}^n \vec{u}_i \vec{v}_i$ and $\norm{\vec{u}}$ denotes the $\ell_2$ norm of the $\vec{u}$, i.e., $\norm{\vec{u}} = (\sum_{i=1}^n \vec{u}_i^2)^{1/2}$. For a subspace 
$S\subseteq \mathbb{R}^n$, $S^{\bot} = \{\vec{x} \in \mathbb{R}^n | \dotp{\vec{x}}{\vec{y}} = 0,\ \forall \vec{y} \in S\}$  is also a subspace and it is called the {\em orthogonal subspace} of $S$. 

\begin{definition}[Lattice]
Given a set of linearly independent vectors $B =\{\vec{b_1},\dots,\vec{b_m}\}$ in a vector space $V$, the lattice spanned by $B$  is the set
$$\mathcal{L}(B)=\left\{\sum_{i=1}^m c_i \vec{b_i} \;|\;  c_i \in \mathbb{Z}  \textrm{ for all } 1\leq i \leq m\right\}$$
\end{definition}

In other words, a lattice is an integral span of $B$. The set $B$ is referred to as a {\em basis} of the lattice. The {\em rank} of the lattice is the number of linearly independent vectors in $B$ and the dimension of a lattice is the dimension of the ambient vector space containing the lattice. In this paper, we denote $B$ by a matrix where column vectors are the vectors of the generating set.
In this representation, the rank of a lattice is the same as the rank of the matrix $B$. Similar to a vector space, a lattice has infinitely many bases.  We will need the concept of unimodular matrices to characterize the bases of a given lattice.

% **********************************************************
% I have the following doubt: If the generating set is not linearly independent, then 
% can we always find a basis for such a lattice?
% **********************************************************

\begin{definition}[Unimodular Matrix]
A matrix $U \in \mathbb{Z}^{n\times n}$ which has a determinant equal to $1$ or $-1$, is called a unimodular 
matrix. 
\end{definition}

Notice that the inverse and the transpose of a unimodular matrix are also unimodular. The following theorem states that two bases generate the same lattice if they are related by a unimodular matrix.

\begin{theorem}\label{PNt2} $B$ and $B'$ (in matrix form) are bases of the same rank-$n$ lattice ${\mathcal L}$ in $\mathbb{R}^n$ if and only if there exists an $n\times n$ unimodular matrix $U$ such that $B' = B U$.
\end{theorem}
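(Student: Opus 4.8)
The plan is to prove both implications using the basic fact that membership in a lattice is equivalent to being an integer combination of a basis, combined with the observation that $B$, being a rank-$n$ basis living in $\mathbb{R}^n$, is an invertible square matrix. Throughout I will treat ``unitary'' in the statement as ``unimodular'' in the sense of the definition given above.

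For the ``if'' direction, suppose $B' = B\cdot U$ for a unimodular $U$. Each column of $B'$ is then an integer combination of the columns of $B$, so $\mathcal{L}(B')\subseteq\mathcal{L}(B)$. Since $U$ is unimodular, its inverse $U^{-1}$ is again an integer (indeed unimodular) matrix, and $B = B'\cdot U^{-1}$, which gives $\mathcal{L}(B)\subseteq\mathcal{L}(B')$; hence $\mathcal{L}(B')=\mathcal{L}(B)$. Moreover $B'$ has rank $n$ because $B$ does and $U$ is invertible, so $B'$ is in fact a basis of $\mathcal{L}$.

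For the ``only if'' direction, assume $B'$ is also a basis of $\mathcal{L}$. Every column of $B'$ lies in $\mathcal{L}=\mathcal{L}(B)$, so there is an integer matrix $U$ with $B'=B\cdot U$; it is uniquely determined because the columns of $B$ are linearly independent. Symmetrically, every column of $B$ lies in $\mathcal{L}(B')$, so there is an integer matrix $V$ with $B=B'\cdot V$. Combining the two, $B = B\cdot(UV)$, and since $B$ is an invertible $n\times n$ matrix, left-multiplying by $B^{-1}$ yields $UV = I_n$. Taking determinants, $\det(U)\det(V)=1$ with $\det(U),\det(V)\in\mathbb{Z}$, which forces $\det(U)=\pm1$; thus $U$ is unimodular and $B'=B\cdot U$.

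The argument is essentially routine; the one place that genuinely uses the hypotheses is the passage from $B=B\cdot(UV)$ to $UV=I_n$, which relies on the lattice having full rank $n$ in $\mathbb{R}^n$ so that $B$ is a square invertible matrix — otherwise one would only learn that $UV$ acts as the identity on the column space of $B$. The final determinant step is just an instance of the Claim stated immediately before the theorem.
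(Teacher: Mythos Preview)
Your argument is correct and is the standard one. Note, however, that the paper states this theorem in its Preliminaries section without proof; it is quoted as a known background fact rather than proved in the paper, so there is no ``paper's own proof'' to compare against. Your reading of ``unitary'' as ``unimodular'' is also right --- the paper uses the two words interchangeably (and inconsistently), as is clear from the surrounding definition and claim.
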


An important concept in lattice theory is the dual of a lattice which is defined as follows.

\begin{definition}[Dual Lattice]
Let $\lat = \lat(B)$ be a lattice in $\R^{n}$. Then, the dual lattice of $\lat$, denoted by $\lat^{*}$ is
    \[\lat^{*} = \{\vec{v} \ | \ \forall \vec{u} \in \lat, \dotp{\vec{v}}{\vec{u}} \in \Z\}\]
\end{definition}
Let $B$ be an invertible matrix. Then, it can be easily shown that if $B$ is the basis of $\lat$, then $D = (B^{-1})^{T}$ is a basis for the dual lattice $\lat^{*}$. $D$ is called the dual basis of $B$. Observe that from the definition of dual basis, we have $D^T B = I$.

\begin{claim} \label{dual}
If $D$ is the dual basis of $B$, then for a basis $B' =BU$ where $U$ is a unimodular matrix, the dual basis is $D' = D(U^{-1})^T$.
\end{claim}

We will now proceeds to define certain computationally hard problems in lattice theory.

\begin{definition}[Shortest Vector Problem ($\SVP$)]
Given a basis $B=\{\vec{b}_1,\hdots,\vec{b}_n\}$, find a shortest non-zero vector $\vec{v}$ in the lattice 
$\mathcal{L}(B)$, i.e
    \[\vec{v} \in \argmin_{\vec{u} \in \mathcal{L}(B)\setminus \{0\}} \norm{\vec{u}}\]
\end{definition}

%\begin{definition}[Closest Vector Problem ($\CVP$)  ]
%Given a basis $B=\{\vec{b}_1,\hdots,\vec{b}_n\}$ and a vector $\vec{t}$, find a shortest vector $\vec{v}$ in the shifted lattice 
%$\vec{t} + \mathcal{L}(B)$, i.e
 %   \[\vec{v} = \argmin_{\vec{u} \in \vec{t} + \mathcal{L}(B)}\norm{\vec{u}}\]
%\end{definition}

\begin{definition}[Closest Vector Problem ($\CVP$)]
Given a basis $B$ and a vector $\vec{t}$, find a vector $\vec{v}$ in the lattice 
$\mathcal{L}(B)$ which is closest from $\vec{t}$, i.e
    \[\vec{v} \in \argmin_{\vec{u} \in \mathcal{L}(B)}\norm{\vec{u} - \vec{t}}\]
\end{definition}

%Observe that $\SVPS$ and $\CVP$ are equivalent problems because $\CVP(B, t) = t - \SVPS(B, t)$.
In this paper, we assume the vector $\vec{t}$ in $\CVP$ instance is linearly independent of basis $B$. In the case where $t$ is not independent, we can increase the dimension of the vector space and obtain linear independence as follows. We work with $B'$ and $t'$ such that 
\[\vec{b_i'} = 
\begin{bmatrix}
    \vec{b_i} \\
    0
\end{bmatrix},
\vec{t'} = 
\begin{bmatrix}
    \vec{t} \\
    1
\end{bmatrix}\]
Except for a constant factor, this one-dimensional increase has no effect on our/existing algorithms' running time.

\begin{definition}
Given a basis $B = \{\vec{b_1},\dots,\vec{b_k}\}$ of a subspace in $\mathbb{R}^n$,
the subspace $span(B)$ has an orthogonal basis $B^* = \{\vec{b_1^*},\dots,\vec{b_k^*}\}$ given by
$\vec{b_i}^* = \vec{b_i} - \sum_{j=1}^{i-1}\mu_{ij}\vec{b_j}^*$ where 
$\mu_{ij} = \dotp{\vec{b_i}}{\vec{b_j}^*} / (\vec{b_j}^*)^2$. This transformation of the basis is called 
{\em Gram Schmidt orthogonalization}.
\end{definition}

Using a Gram Schmidt orthogonalization of a basis of a subspace $S$, it is easy to compute the projection of a vector $\vec{v}$ onto the subspace $S$ as follows. Let $B = \{\vec{b_1},\dots,\vec{b_k}\}$ be a basis of a $k$-dimensional subspace of $\mathbb{R}^n$ and $\vec{v}$
be a vector in $\mathbb{R}^n$. The projection of $\vec{v}$ on the subspace $S= span(B)$
is its component in $S$. If $B^*$ is an orthogonal basis of $span(B)$ (such as
the one computed by Gram-Schmidt orthogonalization), then the projection of $\vec{v}$ on $S$
is 
\[proj_{S}(\vec{v}) = \sum_{i=1}^k \dfrac{\dotp{\vec{v}^T}{ \vec{b_i^*}}}{\dotp{\vec{b_i^*}}{\vec{b_i^*}}} \cdot \vec{b_i^*}\] 
The component of
$\vec{v}$ perpendicular to $S$ is $\vec{v} - proj_S(\vec{v})$. It is equal to the projection
of $\vec{v}$ on $S^{\bot}$, i.e., $proj_{S^{\bot}}(\vec{v}) = \vec{v} - proj_S(\vec{v})$.
The {\em distance} of the point $\vec{v}$ from the subspace $S$ is the length of this vector. So
\[dist(\vec{v},S) = \norm{\vec{v} - proj_S(\vec{v})} = \norm{proj_{S^{\bot}}(\vec{v})} \]

%\begin{lemma}\label{PNl1} Let $S$ be a subspace of $\mathbb{R}^n$ and $\vec{v}$ be any arbitrary vector in $\mathbb{R}^n$. Let $S_1$ be a subspace of $S$. Then $proj_{S_1}(\vec{v}) = proj_{S_1}(proj_S(\vec{v}))$.
%\end{lemma}

We now proceed to define \emph{Maximum Distance Sublattice Problem}.

\begin{definition}[Maximum Distance Sublattice Problem ($\MDSP$)]
Given a basis $[\vec{v} \mid B]=\{\vec{v},\vec{b_1},\dots,\vec{b_n}\}$ for an $n+1$ dimensional lattice $\lat$, find 
$B'=\{\vec{b'_1},\dots,\vec{b'_n}\}$ such that $\{\vec{v},\vec{b'_1},\dots,\vec{b'_n}\}$
is also a basis for $\lat$ and the distance $dist(\vec{v},span(B'))$ is maximum.
Here, we call $\vec{v}$ the fixed vector.
\end{definition}

The following theorem shows that a solution $B'$ to the $\MDSP$ can be achieved from $B$ by adding integral multiples
of $\vec{v}$ to vectors in $B$.

\begin{theorem}
\label{t1}
Let $[\vec{v} \mid B]$ be a basis of an $n+1$ dimensional lattice $\lat$ in $\mathbb{R}^{n+1}$. Then for any 
basis of the lattice of the form $[\vec{v} \mid B'']$, there exists integers $\alpha_1,\alpha_2,\dots,\alpha_n$ such that $[\vec{v} \mid B']$ is also a lattice basis
and $span(B') = span(B'')$ where
	\[B' = B + [\alpha_1\vec{v},\alpha_2\vec{v}, \dots, \alpha_n\vec{v} ] \]
\end{theorem}

We have included a proof of the above theorem in the \Cref{app:proof-of-t1} as we were unable to provide a reference for it.

The following theorem shows a trivial reduction between $\SVPS$ and $\MDSP$.
\begin{theorem} \label{trivial}
There exist polynomial time rank and dimension preserving many-one (Karp) reductions between $\CVP$ and $\MDSP$.
\end{theorem}

\begin{proof}
We will show that $\MDSP(c)$ is equivalent to $\CVP$ on basis $([\vec{d}_1, \dots, \vec{d}_n])$ and target $\vec{u}$ where $[\vec{u}, \vec{d}_1, \dots, \vec{d}_n]$ is the dual basis of $[\vec{v}, \vec{b}_1, \dots, \vec{b}_n]$. We will first show the reduction from $\MDSP$ to $\CVP$ and since all the computations in the reduction are invertible, the other direction is trivial.

Let the input to $\MDSP$ be $B = [\vec{v}, \vec{b}_1, \dots, \vec{b}_n]$ and its dual basis be $D= [\vec{u}, \vec{d}_1, \dots, \vec{d}_n]$. From \Cref{t1}, we know that a solution $B' = [\vec{v}, \vec{b}_1', \dots, \vec{b}_n']$ to $\MDSP$  can be written as $B ' = BU = [\vec{v}, \vec{b}_1 + \alpha_1\vec{v}, \dots, \vec{b}_n + \alpha_n\vec{v}]$, i.e.,
\[
U = \begin{bmatrix}
    	1 & & & \vec{\alpha}^T & & \\
        \begin{matrix}
        0 \\
        \vdots \\
        0
        \end{matrix} & & & \mbox{\Large $\Id_{n}$} & &
	\end{bmatrix}
\]
where $\vec{\alpha}^{T} = [\alpha_1, \dots, \alpha_n]$ is an integer vectors. From \Cref{dual}, we know that the dual basis $D'$ of $B'$ is $D(U^{-1})^{T}$ where
\[
(U^{-1})^T = \begin{bmatrix}
    	1 & 0 & \dots & 0 \\
        -\vec{\alpha} & & \mbox{\Large $\Id_{n}$} & 
	\end{bmatrix}
\]
Therefore, $D' = [\vec{u} - \sum_{i=1}^n \alpha_i \vec{d}_i, \vec{d}_1, \dots, \vec{d}_n]$. Also, from the definition of dual basis, we have $(D')^{T}B' = I$, therefore,
\begin{equation}
    \dotp{\vec{v}}{\left(\vec{u} - \sum_{i=1}^n \alpha_i \vec{d}_i \right)} = 1
\end{equation}
Using the fact that $\dotp{\vec{a}}{\vec{b}} = \norm{\vec{a}} \cdot \norm{\vec{b}} \cdot \cos(\theta)$ where $\theta$ is the angle between $\vec{a}$ and $\vec{b}$, we get
\begin{equation}
\label{equ:v-u-relation}
    \norm{\vec{v}} \cdot \cos(\theta) = \dfrac{1}{\norm{\vec{u} - \sum_{i=1}^n \alpha_i \vec{d}_i}}
\end{equation}
where $\theta$ is the angle between $\vec{v}$ and $\vec{u} - \sum \alpha_i \vec{d}_i$. Using the definition of dual basis, we know that $\vec{u} - \sum \alpha_i \vec{d}_i$ is perpendicular to all $\vec{b}_i'$ because $D'$ is the dual of $B'$. Therefore, $\vec{u} - \sum \alpha_i \vec{d}_i$ is perpendicular to $span(\vec{b}_1', \dots, \vec{b}_n')$. This implies that $90-\theta$ is the angle between $\vec{v}$ and $span(\vec{b}_1', \dots, \vec{b}_n')$. Hence, $\norm{\vec{v}} \cdot \sin(90-\theta)$ is the perpendicular distance between $\vec{v}$ and $span(\vec{b}_1', \dots, \vec{b}_n')$. 

Recall that $B'$ is the solution to the $\MDSP$ instance, which means that the perpendicular distance between $\vec{v}$ and $span(\vec{b}_1', \dots, \vec{b}_n')$ is maximized. In other words, $\norm{\vec{v}} \cdot \sin(90-\theta)$ is maximized. Therefore, $\norm{\vec{u} - \sum \alpha_i \vec{d}_i}$ is minimized due to \Cref{equ:v-u-relation}. But, this is essentially computing the shortest vector in the shifted lattice $\vec{u} + \lat(\vec{d_1}, \dots, \vec{d_{n}})$, which is exactly $\CVP$ with the basis $\{\vec{d_1}, \cdots, \vec{d_n}\}$ and target $\vec{u}$. 
\end{proof}

% **************************************************************
% Last line is not clear. Is this $\SVPS$ in nD or (n+1)D?
% ************************************************************

\section{New Reduction between $\MDSP$ and $\CVP$} \label{MDSPCVP}
In this section, we prove our main theorem, i.e., \Cref{thm:main} which is reduction between $\MDSP$ and $\CVP$ which does not utilize the concept of dual lattices. Let $[\vec{v} \;|\; B]$ be an input to the $\MDSP$.

Keeping Theorem \ref{t1} in consideration, the maximum distance sub-lattice problem can be stated as 
follows. Given an $(n+1)$-dimensional lattice with basis $\{\vec{v},\vec{b_1},\dots,\vec{b_n}\}$, compute an alternative basis $\{\vec{v},\vec{b_1}+j_1\vec{v},\dots,\vec{b_n}+j_n\vec{v}\}$
such that the distance of point $v$ from the subspace spanned by
$\{\vec{b_1}+j_1\vec{v},\dots,\vec{b_n}+j_n\vec{v}\}$ is maximum, where $j_i\in \mathbb{Z}$ for all $i \in [n]$.

%Let $\{\vec{b_1},\dots,\vec{b_n}\}$ be a sub-lattice basis.
Let $P_{x_1,\dots,x_n}$ denote the subspace spanned by the vectors $\vec{b_1}+x_1\vec{v},
\dots, \vec{b_n} + x_n\vec{v}$ for $(x_1,\dots,x_n)\in \mathbb{R}^n$. Following result determines
the distance of the point $\vec{v}$ from $P_{x_1,\dots,x_n}$ for the special case when
 $\{\vec{v},\vec{b_1}, \dots,\vec{b_n}\}$ is an orthonormal basis.

\begin{lemma}\label{MCl1} Let $\{\vec{v},\vec{b_1},\dots,\vec{b_n}\}$ be an orthonormal basis. Then
the distance of point $\vec{v}$ from $P_{x_1,\dots,x_n}$ is $1/\sqrt{1+\sum_{i=1}^n x_i^2}$
for any $(x_1,\dots,x_n) \in \mathbb{R}^n$.
\end{lemma}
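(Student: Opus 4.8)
The plan is to compute $dist(\vec{v}, P_{x_1,\dots,x_n})$ directly through the orthogonal complement of $P_{x_1,\dots,x_n}$ inside $\mathbb{R}^{n+1}$, which is one-dimensional, and then invoke the identity $dist(\vec{v}, P_{x_1,\dots,x_n}) = |proj_{P_{x_1,\dots,x_n}^{\bot}}(\vec{v})|$ from the definition of distance. First I would check that the $n$ generators $\vec{b_1}+x_1\vec{v},\dots,\vec{b_n}+x_n\vec{v}$ are linearly independent: expressed in the orthonormal basis $\{\vec{v},\vec{b_1},\dots,\vec{b_n}\}$, their coordinate matrix has the $n\times n$ identity block in the $\vec{b}$-coordinates (the $i$-th generator has coordinate $x_i$ along $\vec{v}$, coordinate $1$ along $\vec{b_i}$, and $0$ along the other $\vec{b_j}$). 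Hence $P := P_{x_1,\dots,x_n}$ is a genuine $n$-dimensional subspace of $\mathbb{R}^{n+1}$, so $P^{\bot}$ is a line.

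Next I would exhibit a generator of $P^{\bot}$. Writing a candidate $\vec{u} = a\vec{v} + \sum_{j=1}^n c_j\vec{b_j}$ and imposing $\vec{u}^T\cdot(\vec{b_i}+x_i\vec{v}) = 0$ for each $i$, orthonormality collapses every inner product to a Kronecker delta and gives $c_i + a x_i = 0$; taking $a = 1$ yields $\vec{u} = \vec{v} - \sum_{i=1}^n x_i\vec{b_i}$, with $\norm{\vec{u}}^2 = 1 + \sum_i x_i^2$. Since $\vec{v}^T\cdot\vec{u} = \norm{\vec{v}}^2 - \sum_i x_i(\vec{v}^T\cdot\vec{b_i}) = 1$, again by orthonormality, we obtain $proj_{P^{\bot}}(\vec{v}) = (\vec{v}^T\cdot\vec{u}/\norm{\vec{u}}^2)\,\vec{u} = \vec{u}/(1+\sum_i x_i^2)$, whose length is $\norm{\vec{u}}/(1+\sum_i x_i^2) = \sqrt{1+\sum_i x_i^2}\,/(1+\sum_i x_i^2) = 1/\sqrt{1+\sum_i x_i^2}$. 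By the distance formula this equals $dist(\vec{v},P)$, which is the claim.

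There is essentially no serious obstacle; the only points needing care are verifying that the generators of $P$ are independent so that $P^{\bot}$ is truly one-dimensional (handled above) and bookkeeping the inner products, all of which reduce to deltas. I would also mention an alternative route using the relative-volume characterization: $dist(\vec{v},P) = vol(\vec{v},\vec{b_1}+x_1\vec{v},\dots,\vec{b_n}+x_n\vec{v})\,/\,vol(\vec{b_1}+x_1\vec{v},\dots,\vec{b_n}+x_n\vec{v})$, where the numerator equals $vol(\vec{v},\vec{b_1},\dots,\vec{b_n}) = 1$ after subtracting $x_i\vec{v}$ from the $i$-th column, and the denominator is $\sqrt{\det(M^TM)}$ with $M^TM = I_n + \vec{x}\vec{x}^T$ (entry $(i,j)$ being $\delta_{ij} + x_ix_j$ by orthonormality), so by the matrix determinant lemma the denominator is $\sqrt{1+\sum_i x_i^2}$, giving the same value.
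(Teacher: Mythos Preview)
Your proof is correct. Both your primary argument and the paper's proceed by imposing orthogonality conditions and using the orthonormal basis to collapse inner products to Kronecker deltas, but they are dual to one another: the paper parametrizes the projection of $\vec{v}$ onto $P$ as $\sum_i c_i(\vec{b_i}+x_i\vec{v})$, sets the residual $\vec{w}$ orthogonal to each generator, and solves a self-referential scalar equation (introducing $t=\sum_j c_jx_j-1$, deducing $c_i=-x_it$, then $t=-1/(1+\sum_ix_i^2)$) before evaluating $\norm{\vec{w}}^2$. You instead exhibit a single generator $\vec{u}=\vec{v}-\sum_ix_i\vec{b_i}$ of the one-dimensional $P^{\bot}$ and project $\vec{v}$ onto that line, which bypasses the coupled system entirely and is slightly more streamlined. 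Your volume alternative is a third route not present in the paper; it has the merit of making the independence of the numerator from the $x_i$ immediate via column operations, at the cost of invoking the matrix determinant lemma for the denominator.
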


% **********************************************
% If the basis is orthonormal, then $x_i=0$ for all $i$ and said distance is $1$.
% **********************************************

\begin{proof}
Let $\sum_{i}c_i(\vec{b_i}+x_i\vec{v})$ be the projection of vector $\vec{v}$ on $P_{x_1,\dots,x_n}$. 
Then $\vec{w} = \sum_ic_i(\vec{b_i}+x_i\vec{v}) - \vec{v}$ is the perpendicular drop from point 
$\vec{v}$ to the plane. This implies that for all $i \in [n]$,
\begin{equation}
\dotp{\vec{w}}{(\vec{b_i}+x_i\vec{v})} = 0
\end{equation}
By expanding the $\vec{w}$ term and crucially using the fact that the vectors are orthonormal, we get
\begin{align*}
    &\dotp{\vec{w}}{(\vec{b_i}+x_i\vec{v})} \\
    &= \dotp{\sum_{j=1}^{n} c_j(\vec{b_j}+x_j\vec{v}) - \vec{v}}{(\vec{b_i}+x_i\vec{v})} \\
    &= \dotp{\sum_{j=1}^{n} c_j(\vec{b_j}+x_j\vec{v})}{(\vec{b_i}+x_i\vec{v})} - \dotp{\vec{v}}{(\vec{b_i}+x_i\vec{v})} \\
    &= \dotp{\sum_{j=1}^{n} c_j(\vec{b_j}+x_j\vec{v})}{(\vec{b_i}+x_i\vec{v})} - x_i \\
    &= \sum_{j=1}^{n} \dotp{ c_j(\vec{b_j}+x_j\vec{v})}{(\vec{b_i}+x_i\vec{v})} - x_i \\
    &= \sum_{j\neq i} \dotp{ c_j(\vec{b_j}+x_j\vec{v})}{(\vec{b_i}+x_i\vec{v})} + c_i (1 + x_i^2) - x_i \\
    &= \sum_{j\neq i} (c_j x_j x_i) + c_i (1 + x_i^2) - x_i \\
    &= c_i + x_i \cdot \left ( \sum_{j=1}^n (c_j x_j) - 1 \right)
\end{align*}
By equating the last equation to $0$, we get $c_i = -x_i t$ where $t = \sum_{j=1}^n c_jx_j - 1$. This gives us
\begin{align*}
    \vec{w} &= \sum_{i=1}^n c_i(\vec{b_i} + x_i\vec{v}) - \vec{v} \\
    &= \sum_{i=1}^n (-x_i t) \cdot (\vec{b_i} + x_i\vec{v}) - \vec{v} \\
    &= \left( \sum_{i=1}^n -x_i t \cdot \vec{b_i} \right ) + \left( \sum_{i=1}^n -x_i^2 t - 1 \right) \vec{v} \\
\end{align*}
The square of the distance of $\vec{v}$ from the plane $P_{x_1, \dots, x_n}$ is 
\begin{align*}
    \norm{\vec{w}}^2 &= \sum_{i=1}^n c_i^2 + (\sum_{i=1}^n c_ix_i-1)^2 \\
    &= \sum_{i=1}^nc_i^2 + t^2 \\
    &= t^2(1 + \sum_{i=1}^nx_i^2)
\end{align*}

We now focus on expressing $t$ in terms of $x_i$'s. We have 
\begin{align*}
    t &= \sum_{i=1}^n x_ic_i - 1 \\
    &= -t\sum_{i=1}^n x_i^2 - 1 \\
    \implies t &= -1/(1+\sum_{i=1}^n x_i^2)
\end{align*} 
Plugging this in the expression for $\norm{\vec{w}}^2$ we get $\vec{w}^2 = 1/(1+\sum_{i=1}^n x_i^2)$.
\end{proof}

The distance of a vector from a plane $P$ is equal to the length of the vector's projection on the orthogonal plane $P^{\bot}$ and projection is directly proportional to the length of the vector. Hence we have a trivial consequence.

\begin{corollary} \label{MCc1}
Let $\{\vec{v},\vec{b_1},\dots,\vec{b_n}\}$ be an orthogonal basis in which all but $\vec{v}$ are unit vectors. Then the distance of point $\vec{v}$ from $P_{x_1,\dots,x_n}$ is $\norm{\vec{v}}/\sqrt{1+\norm{\vec{v}}^2\sum_{i=1}^{n} x_i^2}$ for any $(x_1,\dots,x_n) \in \mathbb{R}^n$.
\end{corollary}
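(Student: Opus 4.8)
The plan is to reduce Corollary~\ref{MCc1} immediately to Lemma~\ref{MCl1} by a scaling argument, exploiting the fact that distance to a subspace (being a projection onto the orthogonal complement) scales linearly. First I would set $c = |\vec{v}|$ and introduce the normalized vector $\hat{\vec{v}} = \vec{v}/c$. By hypothesis $\{\vec{v},\vec{b_1},\dots,\vec{b_n}\}$ is orthogonal and every $\vec{b_i}$ is a unit vector, so $\{\hat{\vec{v}},\vec{b_1},\dots,\vec{b_n}\}$ is an orthonormal basis and Lemma~\ref{MCl1} applies to it verbatim.

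Next I would relate the two plane families. Observe that $\vec{b_i} + x_i\vec{v} = \vec{b_i} + (c x_i)\hat{\vec{v}}$, so the subspace $P_{x_1,\dots,x_n}$ spanned by the $\vec{b_i}+x_i\vec{v}$ is identical to the subspace spanned by the $\vec{b_i}+(cx_i)\hat{\vec{v}}$; denote the latter, in the orthonormal coordinate system given by $\hat{\vec{v}}$, as $\hat P_{cx_1,\dots,cx_n}$. Thus $P_{x_1,\dots,x_n} = \hat P_{cx_1,\dots,cx_n}$ as sets. Then I would write $\dist{\vec{v}}{P_{x_1,\dots,x_n}} = |proj_{P^{\bot}}(\vec{v})| = c\,|proj_{P^{\bot}}(\hat{\vec{v}})| = c\cdot\dist{\hat{\vec{v}}}{\hat P_{cx_1,\dots,cx_n}}$, using linearity of the orthogonal projection operator in its argument. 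Applying Lemma~\ref{MCl1} to the orthonormal basis $\{\hat{\vec{v}},\vec{b_1},\dots,\vec{b_n}\}$ with the parameters $cx_1,\dots,cx_n$ gives $\dist{\hat{\vec{v}}}{\hat P_{cx_1,\dots,cx_n}} = 1/\sqrt{1+\sum_i (cx_i)^2} = 1/\sqrt{1+c^2\sum_i x_i^2}$.

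Combining these, $\dist{\vec{v}}{P_{x_1,\dots,x_n}} = c/\sqrt{1+c^2\sum_i x_i^2}$, which I would then simplify by dividing numerator and denominator by $c = |\vec{v}|$ to obtain $|\vec{v}|/\sqrt{1/|\vec{v}|^2 + \sum_i x_i^2}$ --- wait, that does not match the claimed form, so I would instead leave it as $c/\sqrt{1 + c^2\sum x_i^2}$. In fact the stated corollary asserts the answer is $|\vec{v}|/\sqrt{1+\sum_i x_i^2}$, which is only consistent with the scaling computation if one interprets the $x_i$ in the corollary as the coefficients relative to the \emph{unit-normalized} fixed vector, i.e.\ relative to $\hat{\vec v}$; under that reading the computation above with parameters $x_i$ (not $cx_i$) gives exactly $c/\sqrt{1+\sum_i x_i^2}$. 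The only subtle point --- and the place I would be most careful --- is pinning down this normalization convention so that the two uses of the symbols $x_i$ refer to the same geometric data; once that is fixed, the proof is just the one-line scaling identity $\dist{\alpha\vec{u}}{S} = \alpha\,\dist{\vec{u}}{S}$ for $\alpha>0$ applied on top of Lemma~\ref{MCl1}. No separate optimization or casework is needed.

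Overall, I do not anticipate a genuine obstacle: the content is entirely the homogeneity of the distance-to-subspace function together with the already-proved Lemma~\ref{MCl1}. The write-up should emphasize (i) that rescaling $\vec{v}$ to a unit vector turns the hypothesis into the orthonormal hypothesis of the lemma, (ii) that the plane $P_{x_1,\dots,x_n}$ is unchanged under replacing $\vec{v}$ by a positive multiple after adjusting coefficients, and (iii) that the distance then scales by $|\vec{v}|$. This is exactly the informal sentence preceding the corollary (``distance from a plane is the projection on its orthogonal plane and projection is directly proportional to the length of the vector''), so the formal proof is a two-or-three-line unwinding of that remark.
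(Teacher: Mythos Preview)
Your approach is precisely the paper's: the authors give no detailed argument, only the preceding sentence that ``projection is directly proportional to the length of the vector,'' which is exactly the homogeneity identity $\dist{\alpha\vec u}{S}=\alpha\,\dist{\vec u}{S}$ you invoke on top of Lemma~\ref{MCl1}.

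More importantly, the discrepancy you flagged is real and is an error in the paper, not in your reasoning. With $P_{x_1,\dots,x_n}$ defined (as in the paper) to be the span of the $\vec{b_i}+x_i\vec v$, normalizing $\hat{\vec v}=\vec v/|\vec v|$ turns this into the span of $\vec{b_i}+(|\vec v|\,x_i)\hat{\vec v}$, and Lemma~\ref{MCl1} together with scaling gives
\[
\dist{\vec v}{P_{x_1,\dots,x_n}} \;=\; \frac{|\vec v|}{\sqrt{\,1+|\vec v|^{2}\sum_i x_i^{2}\,}},
\]
not $|\vec v|/\sqrt{1+\sum_i x_i^2}$ as stated. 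One can verify this directly by rerunning the computation in the proof of Lemma~\ref{MCl1} with $|\vec v|\neq 1$: the orthogonality conditions become $c_j+t\,x_j|\vec v|^2=0$, leading to $t=-1/(1+|\vec v|^2\sum x_i^2)$ and $\vec w^2=|\vec v|^2/(1+|\vec v|^2\sum x_i^2)$. The paper's printed formula is only correct if the $x_i$ are reinterpreted as coefficients of the \emph{unit} vector $\hat{\vec v}$, which is not the stated convention. Your instinct to pin down the normalization was exactly right.

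For what it's worth, this slip does not damage the subsequent reduction to CVP: both the correct formula and the paper's are strictly decreasing in $\sum_i y_i^2$, so maximizing the distance is equivalent to minimizing $\sum_i y_i^2$ either way.
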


\begin{proof}
In this case $\vec{v}$ is no longer a unit vector. The basis of  $P_{x_1\ldots,x_n}$ is $\{\vec{b}_1+x_1\vec{v}, \vec{b}_2+x_2\vec{v},\dots\}$. It is same as $\{\vec{b}_1+x'_1\vec{u}, \vec{b}_2+x'_2\vec{u},\dots\}$ where the additive vector $\vec{u} = \vec{v}/\norm{\vec{v}}$ is a unit vector as required in \Cref{MCl1} and $x'_i = \norm{\vec{v}}x_i$. From the lemma, the distance of the point $\vec{u}$ from $P_{x_1',\ldots,x_n'}$, is $1/\sqrt{1+\sum_i(x'_i)^2} = 
1/\sqrt{1+\norm{\vec{v}}^2\sum_ix_i^2}$. Hence the distance from $\vec{v}$ is $\norm{\vec{v}}/\sqrt{1+\norm{\vec{v}}^2\sum_ix_i^2}$. 
\end{proof}

% *********************************************
% Is it orthogonal basis???????????????????????????????
% *********************************************

We will now focus on the general case in which the vectors $\vec{b_i}$ are not necessarily orthogonal to the vector $\vec{v}$. Let $\vec{b'_i} = \vec{b_i} - \gamma_i \vec{v}$ be perpendicular to $\vec{v}$ for each $i$, where  
$\gamma_i \in \mathbb{R}, \forall i$. So $\gamma_i = \dotp{\vec{b_i}}{\vec{v}}/\norm{\vec{v}}^2$ and 
 the plane spanned by $\{\vec{b'_1}, \dots, \vec{b'_n}\}$ is perpendicular to $\vec{v}$.
Note that $\gamma_i$ need not be an integer. Note that a lattice vector $\vec{b_i}+j_i.\vec{v}$ can now be represented as $\vec{b'_i} + (\gamma_i+j_i)\vec{v}$ in the new reference frame. 

Consider the plane $P_{x_1,\dots,x_n}$ which is spanned by $\vec{b_1} + x_1\vec{v},
\dots, \vec{b_n}+ x_n\vec{v}$. In the new basis, we have
\begin{equation}
\label{eq:P-in-terms-of-b'}
    P_{x_1, \dots, x_n} = span(\vec{b'_1} + (\gamma_1+x_1)\vec{v},
\dots, \vec{b'_n} + (\gamma_n+x_n)\vec{v})
\end{equation}

Let us now transform the basis, $\{\vec{b'_1},\dots,\vec{b'_n}\}$, of the $n$-dimensional subspace
into an orthonormal basis. Let $B'$ denote the matrix in which column vectors are $\vec{b'_1},\vec{b'_2},
\dots, \vec{b'_n}$. Let $L$ be a linear transformation such that the column vectors of $B'' = B' L$
form an orthonormal basis. Denote the column vectors of $B''$ by $\vec{b''_1},\dots,\vec{b''_n}$
which are unit vectors and mutually orthogonal. Therefore,
\begin{equation}
    \label{eq:b''}
    \vec{b''_i} = \sum_{k=1}^n L_{ki}\cdot \vec{b'_k}
\end{equation}
Note that the new basis $\{\vec{b''_1},\dots,\vec{b''_n}\}$ spans the same subspace which is spanned by 
$\vec{b'_1},\dots,\vec{b'_n}$. Now $\{\vec{v}, \vec{b''_1},\dots, \vec{b''_n} \}$ forms 
an orthogonal basis such that all but $\vec{v}$ are unit vectors. 

The plane $P_{x_1,\dots,x_n}$ is spanned by $\vec{b'_1} + (\gamma_1+x_1)\vec{v}, \dots, \vec{b'_n} 
+ (\gamma_n+x_n)\vec{v}$. We will now focus on expressing this plane in terms of the unit vectors $\{\vec{b''_i}\}$. If we extend a line parallel to $\vec{v}$ from the point $\vec{b''_i}$ (where $\vec{v}$ and $\vec{b''_i}$ are perpendicular to each other, for all $i \in [n]$), 
then it must intersect this plane at one point, say, $\vec{b''_i} + y_i\vec{v}$. Then the plane 
spanned by $\{\vec{b''_1} + y_1\vec{v},\dots, \vec{b''_n} + y_n\vec{v}\}$ is $P_{x_1,\dots,x_n}$ 
itself.

Using \Cref{eq:b''}, we have
\begin{align*}
    &\vec{b''_i} + y_i \vec{v} = \sum_{k=1}^n L_{ki}\cdot \vec{b'_k} + y_i \vec{v} \\
    &= \sum_{k=1}^n L_{ki}(\vec{b'_k} + (\gamma_k+x_k) \vec{v}) 
- \sum_{k=1}^n L_{ki}(\gamma_k+x_k) \vec{v} + y_i \vec{v}
\end{align*}
By the choice of $y_i$, $\vec{b''_i} +y_i \vec{v}$ belongs to $P_{x_1,\dots,x_n}$. From \Cref{eq:P-in-terms-of-b'}, we know that vector $\vec{b'_k}+ (\gamma_k+x_k) \vec{v}$ also belongs to the plane for each $k$. But, $v$ 
does not belong to the plane because it is linearly independent from the set of vector $\{\vec{b}_k\}$. Thus, from the linear independence, we can conclude that
$$-\sum_{k=1}^n L_{ki}(\gamma_k+x_k) \vec{v} + y_i \vec{v} = 0$$
This implies that
\begin{align*}
    y_i &= \sum_{k=1}^n L_{ki}(\gamma_k+x_k)\\
    \implies \vec{y} &= L^T\cdot\vec{\gamma} + L^T\cdot \vec{x}
\end{align*}

The plane $P_{x_1,\dots,x_n}$ is spanned by $\vec{b''_1} + y_1\vec{v},\dots, \vec{b''_n} + y_n\vec{v}$
where $\{\vec{b''_1},\dots,\vec{b''_n}\}$ is an orthonormal basis and $\vec{v}$ is perpendicular to
each vector of the set. From Corollary \ref{MCc1}, the square of the distance of $\vec{v}$ from the 
plane $P_{x_1,\dots,x_n}$ is $\norm{\vec{v}}^2/(1+ \norm{\vec{v}}^2\sum_iy_i^2)$. 

Recall that our goal is to find a sub-lattice plane $P_{j_1,\dots,j_n}$,  where $\vec{j} \in \mathbb{Z}^n$, such that the distance from $\vec{v}$ is maximized. Equivalently, we want to find a sub-lattice plane such that
$\sum _i y_i^2 = \norm{\vec{y}}^2$ is minimized, i.e., to minimize the length of the vector $\vec{y}$. Let $\vec{x} = \vec{j} \in \mathbb{Z}^n$, then corresponding $\vec{y} = L^T\cdot \vec{\gamma} 
+ L^T\cdot \vec{j}$. 

We now proceed to construct a $\CVP$ instance that will solve the $\MDSP$ instance. We start define a lattice ${\mathcal L}_1$ with basis $L^T$, i.e., the row vectors of $L$ form a basis of $\lat_1$. 
We denote the rows of $L$ by $\{\vec{r_1},\dots,\vec{r_n}\}$. Let $\vec{z} = -L^T\cdot \vec{\gamma} 
= -\sum_i\gamma_i\vec{r_i}$. Then the length of the vector $\vec{y}$ is equal to the distance between 
the fixed point $\vec{z}$ and the lattice point $\sum_i j_i\vec{r_i}$ of ${\mathcal L}_1$. Thus 
the problem reduces to finding a lattice point of ${\mathcal L}_1$ closest to the point $\vec{z}$. 
Therefore, we have reduced $\MDSP$ to an instance of $\CVP$ where $\{\vec{r}_1,\dots,\vec{r}_n\}$ is a lattice basis and 
$\vec{z}$ is the fixed point.

% ************************************************************
% Again the following lemma talks about orthonormal or orthogonal basis. Is it possible
% for a lattice?
% ************************************************************
The following lemma summarises the computations needed to convert a $\MDSP$ instance to a $\CVP$ instance.

\begin{lemma} 
\label{MDSPtoCVP}
Given a basis of an $(n+1)$-dimensional lattice $\{v,b_1,\dots,b_n\}$ as
an instance of $\MDSP$. Let $\vec{b'_i} = \vec{b_i} - \gamma_i \vec{v}$ for all $1\leq i \leq n$ 
where $\gamma_i = \dotp{\vec{b_i}}{\vec{v}}/\norm{\vec{v}}^2$.
Let $L$ be a linear transformation such that $B'' = B'\cdot L$ is an orthonormal basis.
Equivalently $\{\vec{b''_1},\dots,\vec{b''_n}\}$ is an orthonormal basis where 
$\vec{b''_i} = \sum_{k} (L^T)_{ik} \vec{b'_k}$. Let $\vec{r_i}$ denote the $i$-th row of $L$. 
Then the sub-lattice plane $P_{j_1,\dots,j_n}$ has maximum distance from the point $\vec{v}$ if 
$\sum_ij_i\vec{r_i}$ is a closest lattice vector for the $\CVP$ instance
in which the lattice basis is $\{\vec{r}_1,\dots,\vec{r}_n\}$ and the fixed point is
$-L^T\cdot \vec{\gamma}$.
\end{lemma}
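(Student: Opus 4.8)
The plan is to follow the geometric normalization already set up above and then simply read off the CVP instance. First I would invoke Corollary \ref{c1} to restrict attention to sub-lattice planes of the form $P_{j_1,\dots,j_n}$, spanned by $\vec{b_1}+j_1\vec{v},\dots,\vec{b_n}+j_n\vec{v}$ with $\vec{j}\in\mathbb{Z}^n$: every alternative basis of $\lat$ that keeps $\vec{v}$ fixed has, up to a unimodular change among the remaining vectors, this form, and the quantity to be maximized, $dist(\vec{v},\langle B'\rangle)$, depends only on the plane $\langle B'\rangle$ and not on which basis of it is chosen. So MDSP becomes: over $\vec{j}\in\mathbb{Z}^n$, maximize $dist(\vec{v},P_{j_1,\dots,j_n})$.

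Next I would carry out the two coordinate changes. Writing $\vec{b_i}+x_i\vec{v}=\vec{b'_i}+(\gamma_i+x_i)\vec{v}$ with $\vec{b'_i}=\vec{b_i}-\gamma_i\vec{v}\perp\vec{v}$ re-expresses $P_{x_1,\dots,x_n}$ in a frame where the spanning directions split cleanly into a component in $\langle B'\rangle=\vec{v}^{\perp}$ and a multiple of $\vec{v}$. Then the invertible map $L$ turns $\{\vec{b'_1},\dots,\vec{b'_n}\}$ into the orthonormal set $\{\vec{b''_1},\dots,\vec{b''_n}\}$, so that $\{\vec{v}/|\vec{v}|,\vec{b''_1},\dots,\vec{b''_n}\}$ is an orthonormal basis of $\mathbb{R}^{n+1}$ and Corollary \ref{MCc1} becomes applicable. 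The key computation is to express $P_{x_1,\dots,x_n}$ in the $\vec{b''}$-frame: intersecting the line through $\vec{b''_i}$ parallel to $\vec{v}$ with the plane gives a point $\vec{b''_i}+y_i\vec{v}$, and since $\vec{b''_i}+y_i\vec{v}=\sum_k L_{ki}\big(\vec{b'_k}+(\gamma_k+x_k)\vec{v}\big)+\big(y_i-\sum_k L_{ki}(\gamma_k+x_k)\big)\vec{v}$, with the first sum lying in the plane and $\vec{v}$ not lying in it, linear independence forces $y_i=\sum_k L_{ki}(\gamma_k+x_k)$, i.e. $\vec{y}=L^{T}\vec\gamma+L^{T}\vec{x}$.

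With this in hand, Corollary \ref{MCc1} gives $dist^2(\vec{v},P_{x_1,\dots,x_n})=|\vec{v}|^2/(1+\vec{y}^{\,2})$, which is a strictly decreasing function of $|\vec{y}|$; hence maximizing the distance over $\vec{j}\in\mathbb{Z}^n$ is equivalent to minimizing $|\vec{y}|$ over $\vec{j}\in\mathbb{Z}^n$. Specializing $\vec{x}=\vec{j}$ and writing $L^{T}\vec{j}=\sum_i j_i\vec{r_i}$ (the columns of $L^{T}$ are exactly the rows $\vec{r_i}$ of $L$) and $\vec{z}=-L^{T}\vec\gamma$, we obtain $|\vec{y}|=|\sum_i j_i\vec{r_i}-\vec{z}|=dist\big(\sum_i j_i\vec{r_i},\vec{z}\big)$, so the minimization is precisely the CVP instance with lattice basis $\{\vec{r_1},\dots,\vec{r_n}\}$ and target $\vec{z}=-L^{T}\vec\gamma$. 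Since $\vec{j}\mapsto L^{T}\vec{j}$ is a bijection from $\mathbb{Z}^n$ onto the lattice generated by $\{\vec{r_1},\dots,\vec{r_n}\}$, this gives the claimed equivalence in both directions.

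One loose end I would verify along the way is that $\{\vec{r_1},\dots,\vec{r_n}\}$ really is a lattice basis, i.e. linearly independent: this holds because $L$ is invertible ($\{\vec{b'_i}\}$ and $\{\vec{b''_i}\}$ are both bases of the same $n$-dimensional subspace), hence $L^{T}$ is invertible and its columns $\vec{r_i}$ are independent. I expect the only genuinely delicate step to be the derivation $\vec{y}=L^{T}(\vec\gamma+\vec{x})$, where one must be careful that the leftover coefficient of $\vec{v}$ vanishes \emph{because} $\vec{v}$ is independent of the plane, not through any cancellation inside the plane; everything else is bookkeeping with the two change-of-basis matrices together with an appeal to Corollary \ref{MCc1}.
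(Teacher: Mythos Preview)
Your proposal is correct and follows essentially the same route as the paper: restrict to planes $P_{j_1,\dots,j_n}$ via Corollary~\ref{c1}, shift to the $\vec{b'_i}$-frame orthogonal to $\vec{v}$, orthonormalize via $L$, derive $\vec{y}=L^{T}(\vec{\gamma}+\vec{x})$ by the linear-independence argument, apply Corollary~\ref{MCc1} to get $dist^2=|\vec{v}|^2/(1+\vec{y}^{\,2})$, and read off the CVP instance with basis the rows of $L$ and target $-L^{T}\vec{\gamma}$. Your explicit checks that $L$ is invertible (so the $\vec{r_i}$ are independent) and that $\vec{j}\mapsto L^{T}\vec{j}$ is a bijection onto the lattice are worthwhile additions that the paper leaves implicit.
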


The entire transformation involves only invertible steps hence the converse of the above claim also holds.

\begin{lemma} 
\label{CVPtoMDSP} 
Let the basis $\{\vec{s_1},\dots,\vec{s_n}\}$ and the fixed point $\vec{t} \in \mathbb{R}^{n+1}$ be an instance of $\CVP$. Let $L$ be the matrix in which $i$-th row is $\vec{s_i}$ for all $1\leq i \leq n$. Let ${\bf \gamma} = -(L^T)^{-1}\cdot \vec{t}$. Pick an arbitrary orthonormal basis $\{\vec{e_0},\vec{e''_1}, \dots,\vec{e''_n}\}$ for $\mathbb{R}^{n+1}$. Let $B''$ be the matrix with column vectors $\vec{e''_1},\dots,\vec{e''_n}$. Let $B' = B''\cdot L^{-1}$. Let $\vec{e'_i}$ denote the $i$-th column of $B'$. Let $\vec{e_i} = \vec{e'_i} + \gamma_i\vec{e_0}$. If the $\MDSP$ instance $\{\vec{e_0},\vec{e_1},\dots, \vec{e_n}\}$ has an optimum solution sub-lattice plane formed by $\{\vec{e_1} + j_1\vec{e_0},\dots, \vec{e_n} + j_n\vec{e_0}\}$, then $\sum_ij_i\vec{s_i}$ is the solution of the given $\CVP$ instance.
\end{lemma}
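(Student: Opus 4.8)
The plan is to show that the construction in the statement is exactly the inverse of the MDSP-to-CVP reduction developed in the paragraphs preceding the previous lemma, so that optimality is transferred backwards along a chain of invertible maps. First I would verify that the data built from the CVP instance actually forms a legitimate MDSP instance: since $\{\vec{s_1},\dots,\vec{s_n}\}$ is a basis, the matrix $L$ (whose rows are the $\vec{s_i}$) is invertible, so $(L^T)^{-1}$ and hence $\vec{\gamma} = -(L^T)^{-1}\cdot\vec{t}$ are well defined; $B'' $ has orthonormal columns $\vec{e''_1},\dots,\vec{e''_n}$ spanning the orthogonal complement of $\vec{e_0}$; $B' = B''\cdot L^{-1}$ is therefore a basis of that same $n$-dimensional subspace with columns $\vec{e'_i} \perp \vec{e_0}$; and $\vec{e_i} = \vec{e'_i} + \gamma_i\vec{e_0}$ together with $\vec{e_0}$ gives $n+1$ linearly independent vectors in $\mathbb{R}^{n+1}$, i.e.\ a valid MDSP instance with fixed vector $\vec{e_0}$.

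Next I would trace the earlier reduction applied to this particular MDSP instance and check that it returns precisely the CVP instance we started from. Running the construction from the text on $\{\vec{e_0},\vec{e_1},\dots,\vec{e_n}\}$: the perpendicular components are $\vec{e_i} - \gamma_i\vec{e_0} = \vec{e'_i}$ (so the $\gamma_i$ recovered are the $\gamma_i$ we chose, because $\vec{e'_i}\perp\vec{e_0}$ and $|\vec{e_0}|=1$); orthonormalizing $B' = \{\vec{e'_i}\}$ via a transformation $\tilde L$ with $B'\cdot\tilde L = B''$ forces $\tilde L = L$, since $B' = B''L^{-1}$ and $B''$ is already orthonormal — here I should note that the orthonormalizing transformation is unique up to an orthogonal change of the resulting orthonormal frame, and any such orthogonal factor does not change the length $|\vec{y}|$ being minimized, so we may take $\tilde L = L$ without loss of generality. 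Then the associated CVP instance has lattice basis equal to the rows of $\tilde L = L$, namely $\{\vec{s_1},\dots,\vec{s_n}\}$, and fixed point $-\tilde L^T\cdot\vec{\gamma} = -L^T\cdot\vec{\gamma} = -L^T\cdot(-(L^T)^{-1}\vec{t}) = \vec{t}$. So the reduction sends our MDSP instance back to the given CVP instance.

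Now I would invoke the correspondence established in the previous lemma: for the MDSP instance $\{\vec{e_0},\vec{e_1},\dots,\vec{e_n}\}$, a sub-lattice plane $P_{j_1,\dots,j_n}$ is at maximum distance from $\vec{e_0}$ if and only if $\sum_i j_i\vec{r_i}$ (with $\vec{r_i}$ the $i$th row of the orthonormalizing matrix, i.e.\ $\vec{r_i} = \vec{s_i}$) is an optimal lattice vertex for the CVP instance with basis $\{\vec{s_1},\dots,\vec{s_n}\}$ and target $\vec{t}$ — and by the remark after that lemma the equivalence runs in both directions because every step is invertible. Hence if $\{\vec{e_1}+j_1\vec{e_0},\dots,\vec{e_n}+j_n\vec{e_0}\}$ is an optimal MDSP solution, the corresponding $\vec{j}$ yields $\sum_i j_i\vec{s_i}$ as an optimal CVP solution, which is exactly the claim.

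The main obstacle I anticipate is not any deep inequality but the bookkeeping around the non-uniqueness of the orthonormalizing transformation $L$: the MDSP-to-CVP map as literally written depends on a choice of $L$, and to close the loop cleanly I must argue that different valid choices of $L$ differ by left-multiplication by an orthogonal matrix and that this ambiguity is invisible to the quantity $\sum_i y_i^2$ (equivalently, the distance), so that "the" CVP instance produced is well defined up to an orthogonal relabeling of coordinates that preserves closest-vector optimality. Once that is pinned down, the rest is just substitution: composing the explicit formulas $\vec{\gamma} = -(L^T)^{-1}\vec{t}$, $\vec{z} = -L^T\vec{\gamma}$, and $\vec{r_i} = \vec{s_i}$ to confirm the round trip is the identity on instances, after which the backward direction of the previous lemma finishes the argument.
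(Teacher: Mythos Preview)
Your proposal is correct and follows exactly the paper's approach: the paper dispatches this lemma in a single sentence, observing that ``the entire transformation involves only invertible steps hence the converse of the above claim also holds,'' and you simply unpack that inversion explicitly, verifying that feeding the constructed MDSP instance back through the forward reduction recovers the original CVP data. You are in fact more careful than the paper about the orthogonal ambiguity in the orthonormalizing matrix (a point the paper passes over in silence); one small slip is that the admissible choices of $L$ differ by \emph{right}-multiplication by an orthogonal matrix rather than left, but your conclusion that this leaves $\sum_i y_i^2$ and hence the optimal $\vec{j}$ unchanged is correct.
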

Finally, \Cref{thm:main} is obtained by combining \Cref{MDSPtoCVP} and \Cref{CVPtoMDSP}.

\bibliography{reference}
\bibliographystyle{ieeetr}

\appendix

\section*{Proof of \Cref{t1}}
\label{app:proof-of-t1}
In this section, we provide a proof for \Cref{t1}.
\begin{proof}
Since $[\vec{v} \mid B'']$ and $[\vec{v} \mid B]$ generate the same lattice, there exists a unimodular matrix $U'$ (refer \Cref{PNt2}) such that

\[
	[ \vec{v} \mid B'' ] = [\vec{v} \mid B]  U'
\]
where
\[
U'=  \begin{bmatrix}
    	1 & \beta_1 & \beta_2 & \dots & \beta_{n-1} & \beta_n \\
        \begin{matrix}
        0 \\
        \vdots \\
        0
        \end{matrix} & & & \mbox{\Large $U$} & &
     \end{bmatrix}
\]
The determinant $det(U') = 1 \times det(U) = \pm 1$, so
$det(U) = \pm 1$. Observe that $U' \in \Z^{(n+1) \times (n+1)}$, so $U \in \Z^{n\times n}$ and
is unimodular. So $U^{-1}$ exists and it is also unimodular.
Let us denote $[\beta_1, \beta_2, \dots, \beta_n]$ by $\vec{\beta}^T$. Then
\begin{align*}
	&[ v \mid B'' ]
    \begin{bmatrix}
    	1 & 0 & 0 & \dots & 0 & 0 \\
        \begin{matrix}
        0 \\
        \vdots \\
        0
        \end{matrix} & & & \mbox{\Large $U^{-1}$} & &
	\end{bmatrix} \\
    &=  [v \mid B] 
    \begin{bmatrix}
    	1 & & & \vec{\beta}^T & & \\
        \begin{matrix}
        0 \\
        \vdots \\
        0
        \end{matrix} & & & \mbox{\Large $U$} & &
	\end{bmatrix} 
    \begin{bmatrix}
    	1 & 0 &  \dots & 0 \\
        \begin{matrix}
        0 \\
        \vdots \\
        0
        \end{matrix} & & \mbox{\Large $U^{-1}$} &
	\end{bmatrix} \\
    &= [\vec{v} \mid B] 
    \begin{bmatrix}
    	1 & & & \vec{\beta}^T  U^{-1} & & \\
        \begin{matrix}
        0 \\
        \vdots \\
        0
        \end{matrix} & & & \mbox{\Large $U U^{-1}$} & &
	\end{bmatrix} \\
    &= [\vec{v} | B] 
    \begin{bmatrix}
    	1 & & & \vec{\beta}^T U^{-1} & & \\
        \begin{matrix}
        0 \\
        \vdots \\
        0
        \end{matrix} & & & \mbox{\Large $\Id_n$} & &
	\end{bmatrix}\\
    &= [\vec{v} \mid B] + [\vec{0} \mid \alpha_1\vec{v},\dots,\alpha_n\vec{v}]
\end{align*}
where $\vec{\beta}^{T}  U^{-1} = (\alpha_1,\dots,\alpha_n)^T$.
The left-hand side in the above equation is equal to $[\vec{v} \mid B''U^{-1}]$. So 
$B''  U^{-1} = B + [\alpha_1\vec{v},\dots,\alpha_n\vec{v}]$. 

The matrix $U^{-1}$ is unimodular so $B''$ and $B'=B'' U^{-1}$ span the same sub-lattice and
 $B' = B + [\alpha \vec{v},\dots,\alpha_n \vec{v}]$.
\end{proof}

\end{document}